\newif\ifarxi   
\theoremstyle{definition}
\newtheorem{definition}{Definition}
\theoremstyle{plain}
\newtheorem{theorem}[definition]{Theorem}
\theoremstyle{definition}
\theoremstyle{plain}
\newcommand{\be}{\begin{eqnarray}}
\newcommand{\ee}{\end{eqnarray}}
\newcommand\ba{\begin{array}}
\newcommand\ea{\end{array}}
\newcommand{\bra}[1]{\langle #1|}
\newcommand{\ket}[1]{| #1 \rangle }
\newcommand{\state}[1]{{{\mathcal S}(#1)}}
\newcommand{\alg}[1]{{\mathfrak {#1}}}
\newcommand{\hs}[1]{{\mathcal #1}}
\newcommand{\ml}{l\kern-0.035cm\char39\kern-0.003cm}
\newcommand{\mt}{t\kern-0.035cm\char39\kern-0.003cm}
\newcommand{\md}{d\kern-0.035cm\char39\kern-0.003cm}
\def\hh{{\mathcal H}}
\def\mm{{\mathcal M}}
\def\bb{{\mathcal E}}
\def\cE{{\mathcal E}}
\def\id{{\mathbb I}}
\def\tr{{\rm Tr}}
\begin{document}
\title{Process estimation in presence of time-invariant memory effects}
\author{Tom\'a\v s Ryb\'ar$^1$ and M\'ario Ziman$^{2,3}$}
\address{
$^{1}$
Institut f\"ur Theoretische Physik, Leibniz Universit\"at Hannover, Appelstr. 2, 30167 Hannover, Germany\\
$^{2}$Institute of Physics, Slovak Academy of Sciences, D\'ubravsk\'a cesta 9, 845 11 Bratislava, Slovakia \\
$^{3}$Faculty of Informatics, Masaryk University, Botanick\'a 68a, 60200 Brno, Czech Republic 
}
\begin{abstract}
Any repeated use of a fixed experimental instrument is subject to memory effects. We design an estimation method uncovering the details of the underlying interaction between the system and the internal memory without having any experimental access to memory degrees of freedom. In such case, by definition, any memoryless quantum process tomography (QPT) fails, because the observed data sequences do not satisfy the elementary condition of statistical independence. However, we show that the randomness implemented in certain QPT schemes is sufficient to guarantee the emergence of observable "statistical" patterns containing complete information on the memory channels. We demonstrate the algorithm in details for case of qubit memory channels with two-dimensional memory. Interestingly, we found that for arbitrary estimation method the memory channels generated by controlled unitary interactions are indistinguishable from memoryless unitary channels.
\end{abstract}
\maketitle

\ifarxi\section*{Introduction}\fi

Repeatability of experiments is one of the main conceptual paradigms of
modern science although its meaning during the history has evolved. 
In particular, the quantum experiments are not repeatable in a strict 
sense of individual observations (e.g. no one knows whether a given 
photon passes the polarizer, or not), however, the repeated runs of
such experiments exhibit repeatable statistical patterns (e.g. the 
fraction of photons passing the polarizer is fixed). In other words, 
quantum theory does not give a clear conceptual meaning 
(in sense of repeatability) to individual outcomes, but rather 
to numbers represented by averages and probabilities. 

Therefore, the interpretation of quantum experiments is intimately related 
with our understanding of probabilities, especially with the question  
whether the observed frequencies are really the probabilities occuring 
in theoretical models of the experiments. In any case the repeatability 
of statistical features assumes
that individual runs of the experiment are independent. In theory this means
that each run of the experiment is performed with "fresh" apparatuses 
(under exactly the same conditions), however,
in practise, we do not really employ a new apparatus every time 
the experiment is run. Instead, it is implicitly assumed that the 
internal relaxation processes are sufficiently fast to refresh the whole 
experimental setup. But is such assumption justified?

Consider an experiment in which a quantum particle is sent 
through a quantum channel. While the particle is transfered it interacts
with the degrees of freedom of the channel. According to quantum theory
these interactions are described by Schr\"odinger equation and results
in a unitary transformation of the joint particle-channel system. 
As a result both the particle and the channel are disturbed by this
interaction and the disturbances depends on their original 
characteristics. Consequently, the repeated use of the same channel 
device is not independent of the previous uses, thus, the induced
particle transformation will be typically different. If this is the
case we say the channel exhibits memory effects. Let us stress that 
all the relaxation processes can be incorporated into this unitary model
by extending the size of the memory. 

Indeed, suppose the channel is just "delaying" the transfer of the
particles, i.e. its n$th$ output equals $(n-1)$th input (first
output is set to be in some fixed state). In this case, the
uses are clearly not independent. This can be demonstrated if one's goal
is to estimate the parameters of the quantum process assuming the channel
devices are memoryless. Then different (equivalent in memoryless case) 
estimation procedures could lead to different conclusions. In particular,
if the channel action is tested in an "ordered" fashion, i.e. we first 
analyze how the state $\varrho_1$ is transformed to $\varrho_1^\prime$, 
then $\varrho_2$ to $\varrho_2^\prime$, etc., then any delay vanishes
in the statistical analysis and we must conclude the transformation is
noiseless, i.e. $\varrho\mapsto\varrho^\prime=\varrho$. However,
if the channel is tested in a "random" fashion, i.e. in each run
a random test state is used, then for each fixed input $\varrho$
the output state $\varrho^\prime$ is a fixed state $\varrho_0$ being 
the average input test state, thus, the channel is recognized as 
the maximal noise and therefore not very useful for the transfer 
\emph{per se}.

In the described case, the action of the memory is quite simple and 
when cleverly used, this memory device can be used to transfer information 
in a noiseless way \cite{kretschmann2005}. But how to find out the 
action if the interaction is not known in advance? How to proceed in order 
to detect such memory behavior and finally exploit the memory for our 
purposes? Exactly 
these questions will be addressed in this Letter. It is
organized as follows. We start with introducing all the necessary concepts
and tools. Then we formulate theorems allowing us to design the estimation
algorithm. Finally we will illustrate in details the algorithm for the simplest
possible case of qubit channels with a two-dimensional memory.

\begin{figure}
 \includegraphics[scale=0.35]{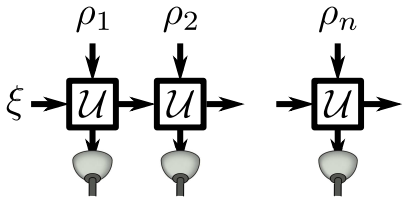}
 \caption{Repeated uses of time-invariant memory process identified with the unitary channel $\cal{U}$ describing the interaction between the device inputs $\varrho_j$ and experimentally inaccessible memory degrees of freedom initially in an unknown state $\xi$.
\label{fig_memchan}}
\end{figure}

\ifarxi\section*{Preliminaries}\else \emph{Preliminaries. }\fi
States $\varrho$ of quantum systems are identified with the set of 
density operators $\state{\hh}$ being positive linear operators
on a Hilbert space $\hh$ of a unit trace. Measurement apparatuses $M$
are described by positive operator valued measures (POVM) being
a set of positive operators $E_1,\dots,E_m$ such that $O\leq E_j\leq I$ 
(called \emph{effects}) and $\sum_j E_j=I$. Each measurement outcome
is associated with exactly one effect and we write $E_k\in M$ if
$E_k$ is an effect associated with one of the outcomes M. 
Quantum channels $\cE$ 
describing the memoryless processes are identified with completely positive 
trace-preserving linear maps defined on the set of traceclass 
operators ${\cal T}(\hh)$. In particular, $\cE:\state{\hh}\to\state{\hh}$.
If $\cE(\varrho)=U\varrho U^\dagger$ for some unitary operator $U$, then
we say the channel is unitary and we denote it by $\cal{U}$. 
Due to Stinespring theorem any channel can be understood as a result
of a unitary interaction between the system and some (initially factorized) 
memory, i.e.  $\cE(\varrho)=\tr_{\cal{M}}[\cal{U}({\xi\otimes\varrho})]$, where $\xi$
is the initial state of the memory and 
$\cal{U}:\state{{\mm\otimes\hh}}\to\state{{\hh\otimes\mm}}$ 
(for more details see for instance Ref.~\cite{heinosaari2013}). 

When modelling (see Refs.~\cite{kretschmann2005,rybar_aps}) 
the experiment with memory process device (used repeatadly) 
we will assume that its action is described by a fixed unitary channel 
and includes all the relaxation processes of the memory. Also we assume
that we do not have any access to memory degrees of freedom, thus, when
we want to learn something about the underlying process $\cal{U}$ we can
only manipulate the system, eventually employing some ancilliary systems 
and devices. The experiment gives raise to a sequence of channels 
$\cE^{{(1)}},\dots,\cE^{{(n)}}$ for $n$ uses of the device defined
as follows
\be
\nonumber
\cE^{{(j)}}(\varrho^{(j)})={\rm tr}_{\mm}[{\cal{U}}^{{(j)}}(\xi\otimes\varrho^{(j)})]\,,
\ee
where ${\cal{U}}^{{(j)}}={\cal{U}}_j\cdots {\cal{U}}_1$ is the $j$-fold concatenation of ${\cal{U}}_{{k}}$ and $\varrho^{(j)}$ is the joint input state describing first $j$ uses of the memory device. The unitary channel ${\cal{U}}_k$ acts as ${\cal{U}}$ on the memory and the $k$th system and trivially elsewhere.

Let us stress that by construction the sequence of processes 
$\cE^{{(1)}},\dots,\cE^{{(n)}}$ is causal ($j$th output does not depend on $k$th input 
for $k>l$), thus, $\cE^{{(j)}}(\varrho^{(j)})={\rm tr}_{j+1}[\cE^{{(j+1)}}(\varrho^{(j+1)})]$. 
Indeed, due to seminal paper by Kretschmann and Werner~\cite{kretschmann2005} 
every causal memory process 
can be represented as a concatenation of unitary channels describing
the sequence of interactions between the memory and systems. 
In our case the memory channel is also time-invariant 
(see Fig.~\ref{fig_memchan}), i.e. the unitary channels applied in
concatenations coincide.

\ifarxi\section*{Quantum process tomography}\fi
\emph{Quantum process tomography} (QPT) is any processing of experimental
data identifying uniquely an unknown memoryless quantum channel 
\cite{paris2004,chuang1997}. It is known to be a complex task, 
however, under certain assumptions 
it can be efficiently applied also to large systems \cite{gross2010,mahler2013,schmiegelow2011,silva2011,cramer2010,flammia2010} and even the accuracy can be assessed \cite{sugiyama2013,kohout2010,kohout2012,christandl12}. 
QPT deals with a scenario when an experimenter is given an unknown 
input-output black box $\bb$. In each run of the experiment he
prepares some test state $\varrho$ and performs a measurement $M$, thus, 
he choses the setting $x=(\varrho,M)$, and, records the outcome 
$E_k$, where $E_k\in M$. Let us denote by 
$X=\{(\varrho_x,M_x)\}_x$ the set of all possible settings. 
The measurement $M_x$ is described by effects $E_{xk}$ and 
by $N_x$ the number of times the setting $x$ was chosen, 
i.e. $\sum_x N_x=n$. In each run of the experiment we observe 
an event $x_k=(\varrho_x,E_{xk})$ saying
that the setting $x=(\varrho_x,M_x)$ was used and the outcome $E_{xk}$
is recorded. The conditional probability of observing the event 
$x_k$ is given by 
formula $p(x_k|\cE)=q_x\tr{E_{xk} \cE[\varrho_x]}$,
where $q_x=N_x/n$ describes the frequency of the setting $x$. 
For suitable choice of $X$  this probability distribution
$p(x_k|\cE)$ enables us to reveal the identity of the channel $\cE$.
Conceptually the simplest example consists of a linearly independent 
collection of test states $\{\varrho_x\}_x$ and a fixed state tomography
measurement $M$ (the same for each $x$).

Clearly, the ordering of the settings $x_1,\dots,x_n$ is irrelevant 
for QPT and only their fraction is needed. However, this is true 
only if the condition of memoryless 
channel are met, i.e. when a fresh copy of the channel is used each 
time the experiment is made. Otherwise the QPT procedure may lead to wrong
conslusions. Suppose we have tested a communication channel (the delaying channel from the Introduction) using
the well-ordered sequence of settings and find out the transfer 
is just perfect, thus, we use it to built a noiseless worldwide 
communication network. However, the communication itself is quite 
far from well-ordered sequence of symbols. It is much closer to a random 
one and for such the considered communication device does not work at all, 
hence, the seemingly "perfect" network fails dramatically. On the other
side, the usage of random sequence of settings leads to a conclusion that
the communication device is of no use. But this is also not true, 
because shifting the outputs by one results in perfect transmission.

\ifarxi\section*{Formulation of the problem}\else \emph{Formulation of the problem. }\fi 
The goal is to capture the underlying dynamics, i.e. the interaction
$\cal{U}$ and the memory state $\xi$. However, as we have only a single copy 
of the state $\xi$, learning any nontrivial information on $\xi$ 
is forbidden by the no-cloning theorem \cite{scarani_nocloning}. 
Moreover, not all the parameters of $\cal{U}$ are accessible within 
our model, too. In particular, the output of the memory channel given by 
$(\cal{U},\xi)$ is the same as of 
 $(({\cal{I}}\otimes{\cal{V}_M}){\cal U}({\cal{V}}^{-1}_{\cal{M}}\otimes{\cal{I}}), {{\cal{V}}_{\cal{M}}}\xi{{\cal{V}}^{-1}_{\cal{M}}})$, for some unitary $\cal{V_M}:\mm\mapsto\mm$. In conclusion, our goal
is to estimate $\cal{U}$ modulo this freedom under the condition that
the initial state of the memory is unknown and the memory is experimentally
inaccessible. 

Before we proceed let us stress that (just like in the memoryless case) 
we are able to predict probabilities, however, by construction our experiments 
cannot be repeated in the statistical sense, hence, the standard
tools and methods of statistical analysis are simply inapplicable. 
In full generality of the problem we are free to choose the input 
state for a given number $n$ of uses of the device, we can choose
the output measurements and we may also employ some ancilla. 

\ifarxi\section*{Controlled unitary interactions}\else \emph{Controlled unitary interactions. }\fi In this example we will show
a family of memory channels, for which the freedom in the estimation 
of the interaction $\cal U$ is much larger. We say the interaction is
controlled unitary, if it can be written in the following form 
\cite{silent_swap}
${\cal{U}}^{\rm ctrl}=\sum_l \ket{l}_{\hs{M}}\bra{l}\otimes {\cal{V}}_l$, where
${\cal{V}}_l$ are arbitrary unitary channels defined on the system and vectors
$\ket{l}$ form an orthonormal basis of the memory Hilbert space.
\begin{theorem} \label{theorem:control_u}
The memory device induced by a controlled unitary interaction 
 ${\cal{U}}^{\rm ctrl}$ is indistinguishable from a memoryless unitary device.
\end{theorem}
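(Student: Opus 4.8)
The plan is to make the effective $n$-use channel explicit and to recognize it as a convex mixture of memoryless unitary devices whose mixing index is fixed, once and for all, by the inaccessible single copy of the memory. First I would exploit the defining feature of a controlled interaction: the memory plays the role of a control register and is never modified. Writing the unitary operator behind ${\cal U}^{\rm ctrl}$ as $\op U=\sum_l\ket{l}\bra{l}\otimes \op V_l$, with $\op V_l$ implementing the channel ${\cal V}_l$, the $k$th factor ${\cal U}_k$ acts as $\sum_l\ket{l}\bra{l}_{\mm}\otimes(\op V_l)_k$. When the $j$ factors are concatenated the memory projectors force all control indices to a single common value, so that $\op U_j\cdots\op U_1=\sum_l\ket{l}\bra{l}_{\mm}\otimes \op V_l^{\otimes j}$.

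Next I would trace out the memory. Because the evolution is diagonal in the control basis, the off-diagonal elements of $\xi$ drop out and one is left with
\be
\nonumber
\cE^{{(j)}}(\varrho^{(j)})=\sum_l p_l\,\op V_l^{\otimes j}\varrho^{(j)}(\op V_l^{\dagger})^{\otimes j},\qquad p_l=\bra{l}\xi\ket{l}.
\ee
The decisive observation is the structure of this map: each term $\op V_l^{\otimes j}(\cdot)(\op V_l^{\dagger})^{\otimes j}$ is \emph{exactly} the $j$-use joint channel of the memoryless unitary device ${\cal V}_l$, and the same index $l$ appears in every tensor slot. Thus the controlled-unitary memory device equals a classical mixture of memoryless unitary devices in which the value $l$ is drawn a single time, with probability $p_l$, and then held fixed for all subsequent uses.

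Finally I would turn this algebraic identity into the claimed indistinguishability. Operationally, $\cE^{(j)}$ is reproduced by first measuring the control in the basis $\{\ket{l}\}$ — obtaining $l$ with probability $p_l$ — and afterwards applying the memoryless unitary ${\cal V}_l$ to every incoming system. Hence in any single realization the device behaves, for all of its uses, as one definite memoryless unitary channel ${\cal V}_l$, and the whole sequence of observed events is statistically identical to that of a genuinely memoryless unitary device. Since we hold only a single copy of $\xi$ and have no access to the memory, we can never rerun the experiment to sample a different branch, so the weights $\{p_l\}$ are fundamentally beyond reach. I expect the main subtlety to lie precisely here: one must argue that it is the \emph{one-shot, shared} nature of the mixing — the same $l$ entering every use — that blocks detection, in sharp contrast to a random-unitary channel whose independent per-use mixing would leave a tomographically visible, non-unitary signature.
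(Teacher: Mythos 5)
Your proposal is correct and follows essentially the same route as the paper: both derive the key identity $\cE^{(n)}(\varrho^{(n)})=\sum_l q_l\,{\cal V}_l^{\otimes n}(\varrho^{(n)})$ with $q_l=\langle l|\xi|l\rangle$, and both conclude indistinguishability from the fact that the mixing index $l$ is drawn once and shared across all uses, so any single realization is consistent with the memoryless unitary device ${\cal V}_l$. Your explicit computation of the concatenated unitary and the deferred-measurement reading of the one-shot mixture merely spell out steps the paper states more tersely.
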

\begin{proof}
Suppose $\varrho^{(n)}$ is the joint state of $n$ inputs and let $\xi_{\mm}$
be the initial state of the memory. Then
$\varrho^{(n)\prime}=\sum_l q_l {\cal{V}}_l^{\otimes n} (\varrho^{(n)})$ 
with $q_l=\langle l|\xi|l\rangle$. Suppose $E$ is an effect on $n$ outputs 
such that $p_E({\cal{U}}^{\rm ctrl})
={\rm tr}[E{\cal{U}}^{\rm ctrl}_n(\varrho^{(n)})]>0$. 
Then for the same input state $\varrho^{(n)}$ also 
$p_E({\cal{V}}_l)={\rm tr}[E{\cal{V}}_l^{\otimes n}(\varrho^{(n)})]>0$ 
for some $l$, thus for any test state the observation of the 
individual outcome $E$ cannot be used to distinguish ${\cal{U}}^{\rm ctrl}$ 
from ${\cal{V}}_l$ (for a suitable $l$).
\end{proof}

In other words, any estimation procedure for this
class of channels results randomly (with probability $q_l$) in one 
of the unitaries ${\cal{V}}_l$.

\ifarxi\section*{Estimation algorithm}\else \emph{Estimation algorithm. }\fi  The algorithm we are going to explain
is based on QPT method with randomly chosen settings (see Fig.~\ref{fig_method}). In particular, 
in each run of the experiment
the setting $x=(\varrho_x,M_x)$ is selected indepentently with the 
probability $q_x$. Let us remind that among $n$ uses of the channel
approximatively $N_x\approx q_x n$ times the setting $x$ is selected. 
Denote by $N_{xk}$ be the number of occurences of the event
$(\varrho_x,E_{xk})$ (with $E_{xk}$ being the effect observed
in the measurement $M_x$) and define a number 
$\tilde{p}(k|x)=N_{xk}/N_x$ (playing the role of conditional
probabilities in case of QPT). The following theorem provides
the statistical interpretation of this number.

\begin{theorem}\label{theo:interpretation} 
If QPT is implemented with randomly chosen settings, then for all
settings $x$ there exist a state of the memory 
$\overline{\xi}\in\state{\hs{M}}$ such that
\be
\nonumber
\lim_{n\to\infty}\tilde{p}(k|x)=p(x_k|x)\equiv 
{\rm tr}[{\cal{U}}(\overline{\xi}\otimes\rho_x)(E_{xk}\otimes I_{\hs{M}})]\,.
\ee
Consequently, we may treat $p(x_k|x)$ as the conditional probability 
$p(x_k|\cE)$ for a \emph{average} channel $\cE(\varrho)
={\rm tr}_{\mm}[{\cal{U}}(\overline{\xi}\otimes\varrho)]$ induced
by the state $\overline{\xi}$, hence, QPT reconstruction results 
in the memoryless channel $\cE$.
\end{theorem}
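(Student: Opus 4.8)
The plan is to reduce the whole repeated experiment to a Markov evolution of the hidden memory, driven by the randomly chosen settings, and then to identify the limiting frequency with the prediction of a single averaged memory state. First I would introduce, for each setting $x$, the induced memory channel $\Lambda_x(\eta)={\rm tr}_{\hh}[{\cal U}(\eta\otimes\varrho_x)]$ on $\state{\mm}$. Since after each use the system carrying $\varrho_x$ is measured and then discarded, the reduced state of the memory is unaffected by the recorded outcome (a measurement never alters the marginal on the untouched subsystem), so the memory simply evolves as $\eta_{j+1}=\Lambda_{x_j}(\eta_j)$ with $\eta_1=\xi$. Because the settings $x_1,x_2,\dots$ are drawn independently with the fixed distribution $q_x$, this is a Markov chain on density operators whose one-step average is the quantum channel $\overline{\Lambda}=\sum_x q_x\Lambda_x$.

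Next I would control the \emph{expected} memory state. By independence of the settings together with linearity of each $\Lambda_x$, the expectation over the random history factorizes and gives $\mathbb{E}[\eta_j]=\overline{\Lambda}^{\,j-1}(\xi)$. The mean ergodic theorem for a quantum channel on the finite-dimensional space $\state{\mm}$ then guarantees that the Ces\`aro averages $\frac1n\sum_{j=1}^{n}\overline{\Lambda}^{\,j-1}(\xi)$ converge to a density operator $\overline{\xi}$, which is automatically a fixed point of $\overline{\Lambda}$; this is the state appearing in the theorem. Because the outcome probability ${\rm tr}[{\cal U}(\eta\otimes\varrho_x)(E_{xk}\otimes I_{\mm})]$ is linear in $\eta$, writing $\mathbb{E}[N_x]=q_x n$ and $\mathbb{E}[N_{xk}]=q_x\sum_j{\rm tr}[{\cal U}(\mathbb{E}[\eta_j]\otimes\varrho_x)(E_{xk}\otimes I_{\mm})]$ yields
\[ \frac{\mathbb{E}[N_{xk}]}{\mathbb{E}[N_x]}=\frac{1}{n}\sum_{j=1}^{n}{\rm tr}[{\cal U}(\overline{\Lambda}^{\,j-1}(\xi)\otimes\varrho_x)(E_{xk}\otimes I_{\mm})]\longrightarrow{\rm tr}[{\cal U}(\overline{\xi}\otimes\varrho_x)(E_{xk}\otimes I_{\mm})], \]
as $n\to\infty$, which is exactly the claimed value. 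The final interpretation is then immediate: setting $\cE(\varrho)={\rm tr}_{\mm}[{\cal U}(\overline{\xi}\otimes\varrho)]$, the right-hand side equals ${\rm tr}[E_{xk}\,\cE(\varrho_x)]$, so the accumulated data are consistent with one memoryless channel and QPT reconstructs $\cE$.

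The genuinely delicate step, which I expect to be the main obstacle, is promoting this statement about expectations into convergence of the actual frequencies $\tilde p(k|x)=N_{xk}/N_x$, since the observed sequence of (setting, outcome) pairs is \emph{not} i.i.d.\ but a functional of the hidden chain $\eta_j$. Passing to the filtration generated by the settings, the quantities $\mathbf{1}[x_j=x]-q_x$, the centered outcome indicators, and the operator increments $\eta_{j+1}-\overline{\Lambda}(\eta_j)$ are all bounded martingale differences, so Azuma--Hoeffding (or the martingale strong law) kills their time averages. What remains is the crux: these estimates leave us needing the \emph{empirical} average $\frac1n\sum_j\eta_j$ of the memory state, not merely its expectation, to converge to the same deterministic $\overline{\xi}$. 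Any almost-sure limit point of $\frac1n\sum_j\eta_j$ must be a fixed point of $\overline{\Lambda}$ whose expectation is forced to equal $\overline{\xi}$, and these two constraints pin the limit down uniquely precisely when $\overline{\Lambda}$ has a unique stationary state (i.e.\ is primitive/ergodic) -- the natural regularity hypothesis under which the deterministic right-hand side of the theorem is even meaningful. I would therefore assume primitivity of $\overline{\Lambda}$ (or restrict to its fixed-point sector); granting this, the empirical frequency concentrates on ${\rm tr}[{\cal U}(\overline{\xi}\otimes\varrho_x)(E_{xk}\otimes I_{\mm})]$ and the proof closes.
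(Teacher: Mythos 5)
Your argument is correct in substance but follows a genuinely different route from the paper's. The paper never introduces the induced memory dynamics explicitly: its proof partitions the state space $\state{\mm}$ into cells $\hs{X}_\mu$, observes that because the setting $x_j$ is drawn independently of the current memory state the subset $\alg{S}_x$ of memory states encountered under setting $x$ is a uniformly random subsample of the full trajectory $\alg{S}$, and concludes that the \emph{conditional} empirical average of the memory seen by each setting coincides with the single overall empirical average $\overline{\xi}=(1/n)\sum_j\xi_j$ --- whatever that average happens to be. The identification of $\overline{\xi}$ as a fixed point of ${\cal C}(\xi)={\rm tr}_{\hh}[{\cal U}(\xi\otimes\overline{\varrho})]$ is deferred to a separate theorem. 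You instead set up the outcome-averaged Markov evolution $\eta_{j+1}=\Lambda_{x_j}(\eta_j)$, note that its one-step average $\overline{\Lambda}=\sum_x q_x\Lambda_x$ is exactly the paper's channel $\cal C$, and obtain $\overline{\xi}$ directly as the Ces\`aro limit via the mean ergodic theorem --- thereby absorbing the paper's Theorem~\ref{th:3} as a corollary --- before passing from expectations to frequencies by martingale concentration. What each approach buys: the paper's subsampling argument is more elementary and establishes the key point (all settings see the \emph{same} average memory state) without any ergodicity hypothesis, at the price of leaving the existence of the limit and the identity of $\overline{\xi}$ implicit; your route constructs $\overline{\xi}$ explicitly and honestly exposes the one place where the theorem as stated needs a hypothesis, namely uniqueness of the fixed point of $\overline{\Lambda}$. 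Your caveat is not pedantry: for controlled-unitary interactions $\overline{\Lambda}$ has a degenerate fixed-point set and the empirical frequencies converge to a \emph{random} limit (one of the $\cal{V}_l$ with probability $q_l$), which is precisely the content of the paper's Theorem~\ref{theorem:control_u} and is acknowledged by the authors only after Theorem~\ref{th:3} (``if the mapping $\cal C$ has a unique fixed point\dots''). One small point to tighten: your claim that the measurement does not alter the memory marginal holds only for the outcome-averaged state (the conditional memory state given an observed outcome is outcome-dependent, as the paper's proof of Theorem~\ref{th:3} makes explicit); this is harmless for your expectation computations by the tower property, but the almost-sure concentration step should be run on the filtration generated by the full record of settings and outcomes.
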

\begin{proof}
Let us denote by $\xi_j$ the state of the memory before $j$th run
of the experiment leading to observation of some effect $E_{x_j k}$.
During the algorithm the memory system undergoes a sequence of 
transformations $\xi\equiv\xi_1\mapsto\xi_2\mapsto\cdots\mapsto\xi_n$
Denote by $\alg{S}$ the set of all states $\{\xi_j\}_j$ occuring in the
sequence and by $\alg{S}_x$ a subset of $\alg{S}$ for which the setting 
$x$ was used.  
Consider a partitioning of $\state{\hs{M}}$ into mutually exclusive subsets
$\{\hs{X}_\mu\}_\mu$, i.e. $\hs{X}_{\mu}\cap\hs{X}_{\nu}=\emptyset$
and $\state{\hs{M}}=\bigcup_{\mu}\hs{X}_{\mu}$. Define 
$p(\hs{X}_{\mu})=|\alg{S}\cap\hs{X}_\mu|/n$ and 
$p_x(\hs{X}_{\mu})=|\alg{S}_x\cap\hs{X}_\mu|/|\alg{S}_x|$ determining the frequency
of the memory state being from the subset $\hs{X}_\mu$ and the
frequency being from $\hs{X}_\mu$ conditioned on the settings $x$, 
respectively. As the choice of the setting $x$ is random the states 
$\xi_j\in\hs{X}_\mu$ are distributed between the sets $\alg{S}_x$ 
at random with probability $q_x$, hence, the subset $\alg{S}_x$ 
is a random sample of $\alg{S}$. Formally 
$|\alg{S}_x\cap\hs{X}_\mu|\approx q_x |\alg{S}\cap\hs{X}_\mu|$ 
for large $n$. Consequently, for all $x$ we obtain the relation
$p_x(\hs{X}_{\mu})\approx p(X_\mu)$, i.e. for any partitioning
the conditional distribution $p_x(\hs{X}_\mu)$ is (in the limit
of large $n$) independent  of the initial settings $x$. In other words,
whatever initial setting is used the average memory state $\overline{\xi}_x$ 
is fixed and $\overline{\xi}_x=\overline{\xi}$. Therefore, 
for each $x$ the observed transformation is $\varrho_x\mapsto\varrho_x^\prime
=(1/n)\sum_{\alg{S}_x}{\rm tr}_\mm {{\cal{U}}(\xi_l\otimes\varrho_x)}
\equiv\cE(\varrho_x)$ with 
$\cE(\varrho_x)={\rm tr}_\mm {{\cal{U}}(\overline{\xi}\otimes\varrho_x)}$.
\end{proof}

Note that a trivial implication of this result is that the average channel on $n$ subsequent inputs reads $\cE_{n}(\rho^{(n)})=\tr_{\mm}[{\cal U}^{(n)}(\overline{\xi}\otimes\rho^{(n)})]$ and corresponds to the probabilities of $n$ joint events. 
This theorem enables us to interpret the result of any QPT method 
with randomly chosen settings, however, it does not tell us what 
the generating state $\overline{\xi}$ is. When a channel $\cE$ is 
reconstructed, then we know the interaction $\cal{U}$
is one of its dilations. The following theorem provides a tool to 
determine the average state $\overline{\xi}$.

\begin{theorem}
\label{th:3}
The average state $\overline{\xi}$ is a fixed point of the
channel ${\cal{C}}(\xi)={\rm tr}_{\hh}[{\cal{U}}(\xi\otimes\overline{\varrho})]$,
where $\overline{\varrho}=\sum_x q_x \varrho_x$ is the average test state, 
i.e. ${\cal{C}}(\overline{\xi})=\overline{\xi}$.
\end{theorem}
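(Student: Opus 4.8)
The plan is to recognize the memory as undergoing a discrete stochastic dynamics over the course of the experiment and to show that $\overline{\xi}$ is exactly the stationary state of the \emph{setting-averaged} one-step map, which is precisely ${\cal C}$. First I would write down the recurrence governing the memory. By construction, if $\xi_j$ is the memory state entering the $j$th run and the randomly selected setting is $x_j=(\varrho_{x_j},M_{x_j})$, then after the interaction and after the system output has been measured and discarded, the memory is updated to
\[
\xi_{j+1}=\tr_{\hh}[{\cal U}(\xi_j\otimes\varrho_{x_j})]\,.
\]
This single-run transition has precisely the form of ${\cal C}$, but with the actual test state $\varrho_{x_j}$ in place of the average $\overline{\varrho}$.

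Next I would average this recurrence over the run index and pass to the limit $n\to\infty$. Summing over $j$ and telescoping, the left-hand side $\frac1n\sum_{j=1}^n\xi_{j+1}$ differs from $\frac1n\sum_{j=1}^n\xi_j$ only by the boundary terms $\xi_1,\xi_{n+1}$, which are bounded and hence vanish in the limit; both averages therefore converge to $\overline{\xi}$. On the right-hand side I would group the runs by the setting used, writing $\frac1n\sum_{j}=\sum_x\frac{N_x}{n}\cdot\frac1{N_x}\sum_{j:x_j=x}$. By linearity of ${\cal U}$ and of the partial trace, the inner conditional average equals $\tr_{\hh}[{\cal U}(\overline{\xi}_x\otimes\varrho_x)]$ with $\overline{\xi}_x=\frac1{N_x}\sum_{j:x_j=x}\xi_j$. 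Here Theorem~\ref{theo:interpretation} enters decisively: it guarantees that the setting-conditioned memory average $\overline{\xi}_x$ converges to the same $\overline{\xi}$ for every $x$. Using $N_x/n\to q_x$ and $\sum_x q_x\varrho_x=\overline{\varrho}$, the right-hand side collapses to $\sum_x q_x\,\tr_{\hh}[{\cal U}(\overline{\xi}\otimes\varrho_x)]=\tr_{\hh}[{\cal U}(\overline{\xi}\otimes\overline{\varrho})]={\cal C}(\overline{\xi})$. Equating the two limits yields $\overline{\xi}={\cal C}(\overline{\xi})$, as claimed.

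The main obstacle is controlling the limits rigorously rather than merely formally. Two convergence facts need care: that the empirical conditional memory averages $\overline{\xi}_x$ genuinely stabilize to a common limit (this I would import wholesale from the proof of Theorem~\ref{theo:interpretation}, which already establishes the setting-independence of the memory statistics in the large-$n$ limit), and that the interchange of the $n\to\infty$ limit with the finite sum over settings $x$ is legitimate (here finiteness of the setting set $X$ together with boundedness of the $\xi_j$ in trace norm makes the argument routine). A secondary point I would state explicitly is that ${\cal C}$ is a legitimate channel on $\state{\hs{M}}$ — completely positive and trace preserving, being the partial trace of a unitary conjugation applied to the fixed second factor $\overline{\varrho}$ — so that a fixed point exists and the assertion $\mathcal{C}(\overline{\xi})=\overline{\xi}$ is well posed.
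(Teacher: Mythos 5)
Your proof is correct and follows essentially the same route as the paper's: both establish the fixed-point equation as a stationarity (self-consistency) condition on the empirical average of the memory trajectory, using the independence of the setting choice from the current memory state --- which you import explicitly via Theorem~\ref{theo:interpretation}'s conclusion $\overline{\xi}_x=\overline{\xi}$, and which the paper encodes by weighting the empirical measure of memory states with the setting probabilities $q_x$. The only substantive difference is that the paper tracks the outcome-conditioned (selective) memory update ${\cal I}_{xk}[\xi]/\tr({\cal I}_{xk}[\xi])$ and shows that the outcome probabilities cancel the normalizations, whereas your recurrence $\xi_{j+1}=\tr_{\hh}[{\cal U}(\xi_j\otimes\varrho_{x_j})]$ is already the outcome-averaged (non-selective) update; since $\sum_k E_{xk}=I$, the two coincide after the averaging over runs that you perform anyway, so nothing is lost.
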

\begin{proof} 

Let $f$ be the measure on $\state{\mm}$ characterizing the distribution of states in the set $\alg{S}$ for large $n$, i.e. $\int_{\hs{X}_{\mu}}{\rm  d}f(\xi)\approx|\hs{X}_{\mu}|$ and $\overline{\xi}\approx\int_{\state{\mm}}{\rm d} f(\xi)\xi$. Given that state $\xi$ enters collision with state $\rho_x$ and the measured output is $E_{xk}$, the exiting state of memory is $\xi_{\rm out}={\cal I}_{xk}[\xi]/\tr({\cal I}_{xk}[\xi])$ where ${\cal I}_{xk}[\xi]=\tr_{\hh}[{\cal U}(\xi\otimes\rho_x)(E_{xk}\otimes I_{\mm})]$. The probability of the event $(\rho_x,E_{xk})$ is $q_x\tr({\cal I}_{xk}[\xi])$ where $q_x$ is the probability of setting $(\rho_x, M_x)$. The average over $\alg{S}$ can be expressed as the average over exiting states $\xi_{\rm out}$ (for inputs $\xi$ distributed according to $\mu$), thus,
\begin{equation}
\nonumber
  \overline{\xi}=\sum_{xk}\int_{\state{\mm}}{\rm d} f(\xi)\frac{q_x\tr({\cal I}_{xk}[\xi]){\cal I}_{xk}[\xi]}{\tr({\cal I}_{xk}[\xi])}={\cal{C}}(\overline{\xi})\,.
\end{equation}
\end{proof}
In conclusion, if the mapping $\cal{C}$ has a unique fixed point
$\xi_0$, then $\overline{\xi}=\xi_0$ for large $n$. 

\begin{figure}
 \includegraphics[scale=0.35]{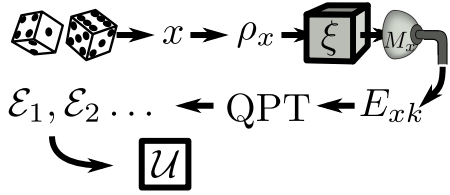}
 \caption{Schematic illustration of the estimation method. Setting $x$ is chosen at random, and outcome $E_{xk}$ is observed. Collecting this data and performing process tomography (QPT) yields a family of channels ${\cal E}_n$ on $n$ subsequent inputs. From these channels the interaction $\cal U$ is determined up to local unitary rotation of the memory system.
\label{fig_method}}
\end{figure}

\ifarxi\section*{Qubit memory channel with two-dimensional memory}\else \emph{Qubit memory channel with two-dimensional memory. }\fi  
In this part we will sketch \cite{supplement} how the described QPT method with randomized inputs can be used to determine the parameters of the unitary interaction $\cal U$ in the simplest case when both the system and the memory are two-dimensional, hence, represented by qubits.

A general two-qubits unitary operator 
can be written in the form \cite{2x2unitary decomposition}
$U=(W_{2}\otimes V_{2})D(\vec{\alpha})(W_{1}\otimes V_{1})$
with $D(\vec{\alpha})$ defined as
\be
\label{eq:Dalpha}
D(\vec{\alpha})=
\exp{\frac{1}{2}\sum_j\alpha_j\sigma_j\otimes\sigma_j}
\ee
where $\sigma_j$ are Pauli operators and 
$0\leq|\alpha_z|\leq\alpha_y\leq\alpha_x\leq\pi/2$. 
Due to equivalence of the memory channels we choose to 
set $W_2=I$, hence, the task is to determine 
$W_1, V_1, V_2$ and $\alpha_i$.

We will consider that QPT consist of test states satisfying 
$\sum_x q_x \varrho_x=\frac{1}{2}I$ and of a fixed informationally complete
measurement (i.e. $M_x=M$ for all $x$). Under this assumptions the channel on memory $\cal{C}$ is unital, thus, having the complete 
mixture as one of the fixed points.

It turns out that for two qubit unitaries the channel $\cal C$ has either a unique fixed point or the interaction is controlled unitary. Consequently, when the QPT estimation with randomized inputs results in a unitary channel, we conclude that the interaction is ${\cal{U}}^{\rm ctrl}$. If the estimated channel is not unitary, then we know that the average state of memory is $\overline{\xi}=I/2$, thus, in Bloch sphere representation the channel $\cal E$ reads ${\cal E}=R_2CR_1$ where $R_i$ are orthogonal rotations corresponding to unitaries $V_i$, respectively, and $C$ is a diagonal matrix $C={\rm diag}[1,c_2 c_3, c_1 c_3, c_1 c_2]$ with $c_l=\cos \alpha_l$. The decomposition ${\cal E}=R_2CR_1$ is obtained by performing singular value decomposition of the Bloch sphere representation of $\cal E$, while taking care that $\det R_i=1$, i.e. they are proper rotations. This is always possible since due to our parametrization the $c_ic_k$ are all positive. From this, one easily gets $|\alpha_i|$. It remains to find the sign of $\alpha_z$ and the unitary $W_1$. This can be achieved \cite{supplement} by estimating the channel on two subsequent inputs ${\cal E}_{2}(\rho_a\otimes \rho_b)=\tr_{\mm}[{\cal{U}}^{(2)}(\overline{\xi}\otimes\rho_a\otimes\rho_b)]$ which we can obtain from the same data set. The method was numerically verified on randomly generated unitaries $\cal{U}$ with random initial memory states. In all cases the method succeeded in correct reconstruction.

\ifarxi \section*{Summary} \else \emph{Summary. }\fi 
We have proposed an estimation method for 
estimating the underlying system-memory interaction $U$ generating 
the memory channel assuming that this interaction is 
time-invariant. The algorithm is based on a random implementation of 
arbitrary memoryless quantum process tomography (QPT) procedure. We proved 
that arbitrary memoryless QPT (implemented with random settings)
results in some memoryless channel $\cE$ with a dilation being the 
system-memory interaction $U$. Moreover, when the average state of the 
memory (during QPT) is known, then the correct identification 
of the interaction (among the unitary dilations of $\cE$) is possible.
We proved this happens when the average testing state is chosen to be 
the complete mixture and the average concurrent channel is strictly 
contractive. In particular, in this case the average memory channel
is the complete mixture as well, hence, the reconstructed 
channel $\cE$ is necessarily unital. The reconstruction method is illustrated
for qubit memory channels with two-dimensional memories.  
It can be extended for systems and memories of arbitrary size, however,
it is an open question what size of concatenation 
$\cE_{n}$ is sufficient for completing the estimation of $\cal U$.

The presented estimation method is universal, however, it is neither 
the most general one, and likely nor the optimal one. We believe that 
our abilities to treat the concept of memory channels in experiments 
provide us with better understanding and control of quantum apparata and, 
therefore, they are not only of a deep foundational interest, 
but have direct application. Conceptually, this work is challenging 
our understanding and interpretation of elementary scientific tools: the
repeatability and the probability. Practically, the problem is 
intimately related to a single-copy estimation of matrix product states 
and the results may be applied for characterization of Hamiltonians 
of single-copy many-body system. In particular, the sequence of repeated 
measurements on the subsystem followed by system's evolution 
(for a fixed time interval) is covered by the considered 
memory channel model. 

\ifarxi \section*{Acknowledgements} \else \emph{Acknowledgements.} \fi
This work has been supported by EU integrated project SIQS and
APVV-0646-10 (COQI). T.R. acknowledges the support of ERC grant DQSIM.
M.Z. acknowledges the support of GA\v CR project P202/12/1142 and RAQUEL.

\ifarxi \section*{Appendix: Estimation of qubit memory channel with two-dimensional memory system} In this example we will assume
both the system and the memory are two-dimensional, hence, 
represtented by qubits. In such case a general unitary operator
can be written in the form \cite{2x2unitary decomposition}
$U=(W_{2}\otimes V_{2})D(\vec{\alpha})(W_{1}\otimes V_{1})$
with $D(\vec{\alpha})$ defined as
\be
\label{eq:Dalpha}
D(\vec{\alpha})=
\exp{\frac{1}{2}\sum_j\alpha_j\sigma_j\otimes\sigma_j}
\ee
where $\sigma_j$ are Pauli operators and 
$0\leq|\alpha_z|\leq\alpha_y\leq\alpha_x\leq\pi/2$. 
Due to equivalence of the considered memory channels we choose to set $W_2=I$.
Further, we will consider that QPT consist of test states satisfying 
$\sum_x q_x \varrho_x=\frac{1}{2}I$ and of a fixed informationally complete
measurement (i.e. $M_x=M$ for all $x$). Under this assumptions the 
channel ${\cal C}(\xi)=\sum_x q_x \tr_{\hh}[{\cal U}(\xi\otimes \rho_x)]$ is unital, thus, having the complete 
mixture as one of the fixed points.

\emph{Unique fixed point.} Let us start with the case when the fixed 
point $\frac{1}{2}I$ is unique, i.e. due to Theorem \ifarxi \ref{th:3} \else 3 (see the main text) \fi
the average memory state reads $\overline{\xi}=\frac{1}{2}I$.
Consequently, the reconstructed channel reads
\be
\cE_1(\rho)=\tr_{\hs{M}}(U(\overline{\xi}\otimes\varrho)U^{\dagger})
=V_2\cE_D(V_1\rho V_1^{\dagger})V_2^{\dagger}\,,
\ee
where $\cE_D$ stands for the channel induced by unitary evolution 
$D(\vec{\alpha})$. In the Bloch sphere representation it is represented
by the diagonal matrix
\be\label{eq:superoper}
\cE_D={\rm diag}[1,C]={\rm diag}[1, c_2 c_3, c_1 c_3, c_1 c_2]
\ee
with $c_l=\cos\alpha_l$ and $C$ being a diagonal 3x3 matrix. 
Under the action of the channel $\cE_1$ a state
$\vec{\varrho}=\frac{1}{2}(I+\vec{r}\cdot\vec{\sigma})$ 
($\vec{r}=(z,y,z)$ and $|\vec{r}|\leq 1$) is transformed
into a state
$\varrho^{\,\prime}=\frac{1}{2}(I+\vec{r}^{\,\prime}\cdot\vec{\sigma})$
with $\vec{r}^{\,\prime}=R_2CR_1\vec{r}$ and $R_1,R_2$ being orthonogal
rotation matrices associated with the unitary conjugations $V_1,V_2$, 
respectively. As a result of QPT we obtain the channel $\cE_1$, i.e. 
the matrix $E_1=R_2 C R_1$. In order
to specify the parameters of $R_1$,$R_2$ and $C$ we 
implement the singular value decomposition of the 
reconstructed matrix $E_1$ (recall that by definition
the entries of the matrix $C$ are non-negative). After
this is completed it remains to determine the sign 
of $\alpha_z$ and $W_2$ that have no effect on $\cE_1$.

The unitary $W_2$ will manifest itself if we look at second concatenation
of the memory channel. In particular, we will focus on the 
conditional channel $\cE_{|\varrho_1}$ describing the transformation 
of the states following the input test state $\varrho_1$, i.e.
we apply QPT to learn the average mapping
\be
\cE_{|\rho_1}(\rho)&:=&\tr_{\hs{M},1}[{\cal{U}}_2(\overline{\xi}\otimes\rho_1\otimes\rho)]\nonumber\\
&=&V_2\tr_{\hs{M}}[D(C_{\rho_1}(\overline{\xi})\otimes V_1\rho V_1^{\dagger})D^{\dagger}]V_2^{\dagger},\nonumber
\ee 
where, ${\cal{U}}_2$ is the two-fold concatenation of the memory channel,
$C_{\rho_1}(\overline{\xi})=W_2\tr_1[D(\overline{\xi}\otimes V_1\rho_1V_1^{\dagger})D^{\dagger}]W_2^{\dagger}$ and we used $D\equiv D(\alpha)$.

Since $\overline{\xi}=\frac{1}{2}I$ we can easily compute $C_{\rho_1}(\frac{1}{2}I)=\frac{1}{2}(I + \vec{m}\cdot\vec{\sigma})$ where $\vec{m}=O_2SR_1\vec{r}_1$, $S={\rm diag}[s_2s_3,s_3s_1,s_1s_2]$ with $s_j=\sin\alpha_j$, $O_2$ corresponds to unitary rotation $W_2$, and we write $\rho_1=\frac{1}{2}(I +\vec{r}_1\cdot\vec{\sigma})$. In Bloch sphere representation we obtain
$\vec{r}\mapsto\vec{r}^{\,\prime}=R_2FR_1\vec{r}+R_2 S\vec{m}$, where 
\be
F=\left(
\ba{ccc}
c_2c_3 & m_zc_2s_3 & -m_yc_3s_2\\
-m_zc_1s_3 & c_3c_1 & m_xc_3s_1\\
m_yc_1s_2 & -m_xc_2s_1 & c_1c_2\\
\ea
\right).
\ee
As a result of the QPT we find the transformation
$\vec{r}\mapsto\vec{r}^{\,\prime}=T\vec{r}+\vec{t}$.
From the knowledge of $\vec{t}$ we can fully recover all 
the parameters of $W_2$ (via $O_2$) and the parameter 
$F_{21}$ (constructed from $T$) we can set the sign of 
$\alpha_z$.

\emph{Non-unique fixed point.}
In this case the complete mixture $\frac{1}{2}I$ is not the only 
fixed point of $\cal{C}$. We will show that for the considered case
this means the memory channel is driven by controled unitary evolution
we discussed before. By definition
\be
\nonumber
{\cal{C}}(\xi)=W_2\tr_1[D(\xi\otimes \frac{1}{2}\id)D^{\dagger}]W_2^{\dagger}
=W_2\cE_D(\xi)W_2^{\dagger}\,,
\ee
where $\cE_D$ is the as in Eq.\eqref{eq:superoper}, because $D$ is invariant with respect to exchange of the memory and the input system. The only possibility for $\cal{C}$ to have multiple fixed point is that the matrix $C$ has at least one eigenvalue $1$ and that the corresponding vector is left invariant under the action of $W_2$. This means that $c_k=c_l=\pm 1$ for two different $k,l$. Due to restrictions on $\alpha_k$ (see Eq.\eqref{eq:Dalpha}) we can immediately state that $\alpha_z=\alpha_y=0$, thus, $D=\cos(\alpha_x/2)\id\otimes\id+i\sin(\alpha_x/2)\sigma_x\otimes\sigma_x$ and $W_2=e^{i\beta\sigma_x}$. Let $P_{\pm}$ be the projections onto eigenvectors of $\sigma_x$ associated with the eigenvalues $\pm 1$, respectively. Then clearly $U=(W_2\otimes V_2)D(I\otimes V_1)=
e^{i\beta}P_+\otimes V_++e^{-i\beta}P_-\otimes V_-$, where
$V_\pm=V_2 e^{\pm i\frac{\alpha_x}{2} \sigma_x}V_1$,
is a controlled unitary operator, thus, by Theorem \ifarxi \ref{theorem:control_u} \else 1 (see the main text)\fi 
the reconstruction obtained by QPT results in one of the unitary 
channels induced by unitary operators $V_\pm$.
 \fi

\end{document}